\newcommand*{\rom}[1]{\expandafter\@slowromancap\romannumeral #1@}
\newtheorem{lemma}{Lemma}
\newcommand{\pr}{\mathrm{Pr}}
\newcommand{\xE}{\mathbb{E}}
\newcommand{\Qnet}{\bar Q_{\text{net}}}
\begin{document}
\title{Optimized Training Design for Multi-Antenna Wireless Energy Transfer in Frequency-Selective Channel}
\author{\authorblockN{Yong~Zeng and Rui~Zhang\\}
\authorblockA{Department of Electrical and Computer Engineering, National University of Singapore\\
Email: \{elezeng, elezhang\}@nus.edu.sg}}
\maketitle

\begin{abstract}
This paper studies the optimal  training design for a multiple-input single-output (MISO) wireless energy transfer (WET) system in frequency-selective channels, where the frequency-diversity and energy-beamforming gains can be both achieved by properly learning the channel state information (CSI) at the energy transmitter (ET). By exploiting  channel reciprocity, a two-phase  channel training scheme is proposed to achieve the diversity and beamforming gains, respectively.  In the first phase, pilot signals are sent from the energy receiver (ER) over a selected subset of the available frequency sub-bands, through which the sub-band  that exhibits  the largest sum-power over all the antennas at the ET is determined and its index is sent back to the ER. In the second phase, the selected sub-band is further trained for the ET to estimate the multi-antenna channel and implement energy beamforming. We propose to maximize the \emph{net} energy harvested at the ER, which is the total harvested energy offset by that used for the two-phase channel training. The optimal training design, including the number of  sub-bands trained and the energy allocated for each of the two phases, is derived.
\end{abstract}

\section{Introduction}
 Wireless energy transfer (WET) has drawn significant interests recently due to its great potential to provide cost-effective and reliable power supplies for energy-constrained wireless networks \cite{502}. One enabling technique of WET for long-range applications (say up to tens of meters) is via radio-frequency (RF) or microwave prorogation, where dedicated energy-bearing signals are transmitted from the energy transmitter (ET) for the energy receiver (ER) to harvest the RF energy (see e.g. \cite{525} and references therein). To overcome the significant power attenuation over distance, employing multiple antennas at the ET and advanced beamforming techniques to efficiently direct wireless energy to the destined  ER, termed {\it energy beamforming}, is an essential technique for WET \cite{478}. Similar to the emerging massive multiple-input multiple-output (MIMO) enabled wireless communications (see e.g. \cite{497} and references therein), by equipping a very large number of antennas at the ET, enormous  energy beamforming gain can be achieved; hence, the end-to-end energy transfer efficiency can be greatly enhanced.


  On the other hand, for MIMO WET in a wide-band regime over frequency-selective channels, the frequency-diversity gain can also be exploited to further enhance the energy transfer efficiency, by transmitting more power over the sub-band with higher channel gain. WET in single-antenna or single-input single-output (SISO) frequency-selective channels has been studied in \cite{504,522,526} under the more general setup of simultaneous wireless information and power transfer (SWIPT), where perfect channel state information (CSI) is assumed at the transmitter. 




   In practice, both the energy-beamforming and frequency-diversity gains in MIMO WET over frequency-selective channels can be achieved, but crucially depend on the available CSI at the ET, which needs to be practically obtained  at the cost of additional time and energy consumed. Similar to wireless communication,  a direct approach to obtain CSI is by sending pilot signals from the ET to the ERs, each of which estimates the corresponding channel and then sends the estimated channel back to the ET via a feedback channel  \cite{484,495}. However, since the training overhead increases with the number of antennas $M$ at the ET, this method is not suitable when $M$ is large. In \cite{491}, a new channel-learning design to cater for the practical RF energy harvesting circuitry at the ER has been proposed. However, the training overhead still increases quickly with $M$, and can be prohibitive for large $M$. In \cite{528}, by exploiting channel reciprocity between the forward (from the ET to the ER) and reverse (from the ER to the ET) links, we have proposed an alternative channel-learning scheme for WET based on the reverse-link training, which is more efficient since the training overhead becomes independent of $M$. However, the proposed design in \cite{528} applies only for narrowband flat-fading channels instead of the more complex broadband frequency-selective fading channels, which  motivates this work.

  In this paper, we consider a MISO point-to-point WET system over frequency-selective fading channels. To exploit both the frequency-diversity and energy-beamforming gains, we propose a two-phase channel training scheme by exploiting the channel reciprocity. In the first phase, pilot signals are sent from the ER over a selected subset of the available frequency sub-bands, each over an independent flat-fading channel. Based on the received total energy over all the antennas at the ET over each of the trained sub-bands, the ET determines the sub-band  that has the largest energy and sends its  index to the ER. In the second phase, the selected sub-band is further trained by the ER, so that the ET obtains an estimate of the exact MISO channel over this sub-band to implement energy beamforming. Due to  the limited energy harvested  at the ER, the training design needs to achieve a good balance between exploiting the diversity versus beamforming gains, yet without consuming excessive energy at the ER. Therefore, we propose to maximize the \emph{net} energy harvested at the ER, which is the total harvested energy offset by that used for both phases of  channel training. The optimal training design, including the number of sub-bands trained and the energy allocated for each of the two training phases, is derived. Simulation results are provided to validate our analysis.

 \section{System Model}\label{sec:systemModel}
We consider the MISO point-to-point WET system in frequency-selective channel, where an ET with $M\geq 1$ antennas is employed to deliver wireless energy to a single-antenna ER. We assume that the total bandwidth is $B$ Hz, which is equally divided into $N$ orthogonal sub-bands with the $n$th sub-band centered at  frequency $f_n$ and of bandwidth $B_s=B/N$. We assume that $B_s\ll B_c$, where $B_c$ denotes the channel coherence bandwidth, so that the channel between the ET and ER experiences frequency flat-fading within each sub-band. Denote $\mathbf h_n\in \mathbb{C}^{M\times 1}$, $n=1,\cdots, N$, as the baseband equivalent MISO channel from the ET to the ER in the $n$th sub-band. We assume a quasi-static  Rayleigh fading model, where $\mathbf h_n$ remains constant within each block of $T\ll T_c$ seconds, with $T_c$ denoting the channel coherence time, but can vary from one block to another. 
 Furthermore, the elements in $\mathbf h_n$ are modeled as independent and identically distributed (i.i.d.) zero-mean circularly symmetric complex Gaussian (CSCG) random variables with  variance $\beta$, i.e.,
\begin{align}
\mathbf h_n \sim \mathcal{CN}(\mathbf 0, \beta \mathbf I_M), \ n=1,\cdots, N, \label{eq:hn}
\end{align}
where $\beta$ models the large-scale fading  due to shadowing as well as the distance-dependent path loss. 

\begin{figure}
\centering
\includegraphics[scale=0.8]{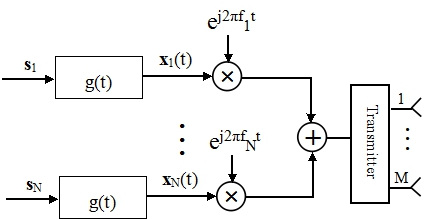}
\caption{Schematics of a  multi-antenna multi-band wireless energy transmitter.}\label{F:transmitter}
\end{figure}

Within each block of $T$ seconds, i.e., $0\leq t\leq T$, the input-output relation for the forward link energy transmission can be expressed as
\begin{align}
y_n(t)= \mathbf h_n^H \mathbf x_n(t) + z_n(t), \ n=1,\cdots, N, \label{eq:yt}
\end{align}
where $y_n(t)$ denotes the received signal at the ER; $\mathbf x_n(t)\in \mathbb{C}^{M\times 1}$ denotes the baseband energy-bearing signals transmitted by the ET in the $n$th sub-band; and $z_n(t)$ denotes the additive noise at the ER. Different from wireless communication where random signals need to be transmitted to convey information, $\mathbf x_n(t)$ in \eqref{eq:yt} is designated  only for energy transmission and thus can be chosen to be  deterministic. Denote by $P_f$ the total transmit power constraint at the ET over the $N$ sub-bands. We thus  have
\begin{align}\label{eq:Pf}
\frac{1}{T}\sum_{n=1}^N \int_0^T \left \| \mathbf x_n(t)\right \|^2 dt\leq P_f.
\end{align}

At the ER, the incident RF power captured by the antenna is converted to usable direct current (DC) power by a device called rectifier \cite{514}. 
By ignoring the energy harvested from the background noise which is practically small, the total harvested energy over all $N$ sub-bands during one  block can be expressed as \cite{478}
\begin{equation} \label{eq:Q}
 \begin{aligned}
  Q=\eta \sum_{n=1}^N \int_0^T \left|\mathbf h_n^H \mathbf x_n(t)\right|^2dt,
  \end{aligned}
 \end{equation}
 where $0<\eta\leq 1$ denotes the energy harvesting efficiency at the ER. 
  Without loss of generality, $\mathbf x_n(t)$ can be expressed as (see Fig.~\ref{F:transmitter} for the transmitter schematics)
 \begin{align}
 \mathbf x_n(t)=\mathbf s_n  g(t), \ 0\leq t\leq T, \ n=1,\cdots, N, \label{eq:xn}
 \end{align}
 where $\mathbf s_n \in \mathbb{C}^{M\times 1}$, and $g(t)$ represents the pulse-shaping waveform (e.g., raised cosine pulse) with normalized power, i.e., $\frac{1}{T} \int_0^T |g(t)|^2dt=1$. 
   Note that the bandwidth of $g(t)$, which is approximately equal to $1/T$, needs to be no larger than $B_s$. We thus have
 \begin{align}
 \frac{1}{T_c}\ll \frac{1}{T}<B_s \ll B_c, \label{eq:underspread}
 \end{align}
 or $T_cB_c\gg 1$, i.e., a so-called ``under-spread'' wide-band fading channel is assumed.



  From  \eqref{eq:xn}, the  power constraint in \eqref{eq:Pf} can be rewritten as $\sum_{n=1}^N \|\mathbf s_n\|^2\leq P_f$, and the harvested energy $Q$ in \eqref{eq:Q} can be expressed as $Q=\eta T \sum_{n=1}^N |\mathbf h_n^H \mathbf s_n|^2$. In the ideal case with perfect CSI, $\{\mathbf h_n\}_{n=1}^N$, at the ET, the optimal design of $\{\mathbf s_n\}_{n=1}^N$  that maximizes $Q$ can be obtained by solving the following problem
\begin{equation}\label{P:perfectCSI}
\begin{aligned}
\max \ & \eta T \sum_{n=1}^N \left|\mathbf h_n^H \mathbf s_n \right|^2 \\
\text{ subject to } & \sum_{n=1}^N \| \mathbf s_n\|^2\leq P_f.
\end{aligned}
\end{equation}

It can be easily shown  that the optimal solution to problem \eqref{P:perfectCSI} is
\begin{align}\label{eq:optxn}
\mathbf s_n= \begin{cases}
\sqrt {P_f} \frac{\mathbf h_n}{\|\mathbf h_n\|}, & \text{ if } n=\arg \underset{n^\prime=1,\cdots, N}{\max}\|\mathbf h_{n^\prime}\|^2, \\
\mathbf 0, & \text{ otherwise}.
\end{cases}
\end{align}
The resulting harvested energy can be expressed as
\begin{align}
Q_{\max}=\eta T P_f \underset{n=1,\cdots, N}{\max} \left \| \mathbf h_n \right \|^2. \label{eq:Qmax}
\end{align}
It is observed from \eqref{eq:optxn} that for a MISO multi-band WET system with the sum-power constraint, the optimal energy transmission scheme  allocates all the available power to the sub-band with the largest MISO channel power. As a result, all the other sub-bands can be used for other applications such as communication.  
 The solution given in \eqref{eq:optxn} also indicates  that for the selected sub-band, maximum ratio transmission (MRT) should be performed across different transmit antennas at the ET  to achieve the maximum energy beamforming gain. 

In practice, the CSI $\{\mathbf h_n\}_{n=1}^N$ needs to be estimated  at the ET. By exploiting channel reciprocity, we propose a two-phase channel training scheme, as illustrated in Fig.~\ref{F:twoPhaseTraining}. The first  phase corresponds to the first $\tau_1< T$ seconds of each block, where pilot signals are sent by the ER to the ET over $N_1$ out of the $N$ available sub-bands, each with energy $E_1$. By estimating the received energy over all $M$ antennas at the ET over each of the $N_1$ trained sub-bands (whose indices are assumed to be known at the ET),  the ET determines the sub-band with  the largest power gain $\|\mathbf h_{n^{\star}}\|^2$, and sends the index $n^{\star}$ to the ER. In the second phase of $\tau_2< T-\tau_1$ seconds, additional training signal is sent by the ER in sub-band $n^{\star}$ with  energy $E_2$. 
 The ET then obtains an estimate of the exact MISO channel $\mathbf h_{n^\star}$, based on which MRT-based energy beamforming is applied  during the remaining $T-\tau_1-\tau_2$ seconds of each block. The proposed two-phase training scheme is elaborated in more details in the next section.

\begin{figure}
\centering
\includegraphics[scale=0.9]{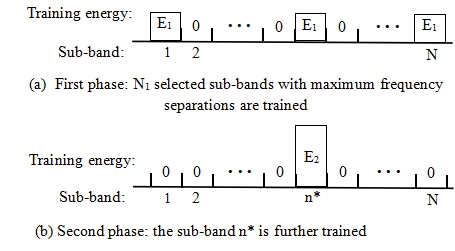}
\caption{Two-phase channel  training for multi-antenna multi-band wireless energy transfer.}\label{F:twoPhaseTraining}
\end{figure}

\section{Problem Formulation}
\subsection{Two-Phase Training}
\subsubsection{Training Phase \rom{1}}
Denote by $\mathcal{N}_1\subset \{1,\cdots, N\}$ with $|\mathcal{N}_1|=N_1$ the $N_1$ selected sub-bands trained in phase \rom{1}.  To maximize the frequency-diversity gain, the sub-bands with the maximum frequency separations are selected  in $\mathcal{N}_1$ so that their channels are most likely to be independent (see Fig.~\ref{F:twoPhaseTraining}(a)), e.g., if $N_1=2$, we have $\mathcal{N}_1=\{1,N\}$.  The received training signals at the ET can be written as
\begin{align}
\mathbf r_n^{\text{\rom{1}}}(t)=\sqrt{E_1} \mathbf h_n \phi_n(t) + \mathbf w_n^{\text{\rom{1}}}(t), \ 0\leq t \leq \tau_1, \ n\in  \mathcal{N}_1,
\end{align}
where 
$E_1$ denotes the training energy used by the ER for each trained sub-band; $\phi_n(t)$ represents the training waveform for sub-band $n$ with normalized energy, i.e., $\int_0^{\tau_1} |\phi_n(t)|^2dt=1$, $\forall n$;  and $\mathbf w_n^{\text{\rom{1}}}(t)\in \mathbb{C}^{M\times 1}$ represents the additive white Gaussian noise received at the ET with power spectrum density $N_0$.  The total energy consumed at the ER for channel training in this phase  is
\begin{align}
E_{\text{tr}}^{\text{\rom{1}}}=\sum_{n\in \mathcal{N}_1} \int_0^{\tau_1}\left|\sqrt{E_1} \phi_n(t)\right|^2=E_1 N_1.
\end{align}

At the ET, the received training signal is first separated over different selected sub-bands; then each $\mathbf r_n^{\text{\rom{1}}}(t)$ passes through a matched filter to get
\begin{align}
\mathbf y_n^{\text{\rom{1}}}= \int_0^{\tau_1} \mathbf r_n^{\text{\rom{1}}}(t) \phi_n^*(t) dt=\sqrt{E_1} \mathbf h_n + \mathbf z_n^{\text{\rom{1}}}, \ n\in \mathcal{N}_1, \label{eq:ynI}
\end{align}
where $\mathbf z_n^{\text{\rom{1}}}\sim \mathcal{CN}(\mathbf 0, N_0 \mathbf I_M)$ denotes the i.i.d. additive Gaussian noise vector.
 Based on \eqref{eq:ynI}, the ET determines the sub-band $n^{\star}$ that has the largest received energy as
\begin{align}
n^{\star}=\arg \underset{n\in \mathcal{N}_1}{\max} \| \mathbf y_n^{\text{\rom{1}}} \|^2. \label{eq:nstar}
\end{align}
The ET then sends  the index $n^{\star}$ to the ER.

\subsubsection{Training Phase \rom{2}}
In the second phase of $\tau_2$ seconds, additional pilot signal $u(t)$ is transmitted by the ER over sub-band $n^{\star}$ with energy $E_2$. With similar processing as that in phase \rom{1}, the received  signal at the ET over sub-band $n^\star$ is
%
\begin{align}
\mathbf y_{n^{\star}}^{\text{\rom{2}}} = \sqrt{E_2} \mathbf h_{n^{\star}} + \mathbf z_{n^\star}^{\text{\rom{2}}}, \label{eq:ynII}
\end{align}
where $\mathbf z_{n^\star}^{\text{\rom{2}}}\sim \mathcal{CN}(\mathbf 0, N_0 \mathbf I_M)$.
The ET then performs the linear minimum mean-square error (LMMSE) based estimation for $\mathbf h_{n^{\star}}$  based on $\mathbf y_{n^{\star}}^{\text{\rom{2}}}$.\footnote{In principle, $\mathbf h_{n^{\star}}$ can be estimated based on both observations $\mathbf y_{n^\star}^{\text{\rom{1}}}$ and $\mathbf y_{n^{\star}}^{\text{\rom{2}}}$. To simplify the processing of multi-band energy detection in phase \rom{1} training, we assume that $\mathbf y_{n^\star}^{\text{\rom{1}}}$ is only used for estimating $\|\mathbf h_{n^\star}\|^2$ while  only $\mathbf y_{n^{\star}}^{\text{\rom{2}}}$ is used for estimating $\mathbf h_{n^{\star}}$.} To obtain the optimal LMMSE estimator, we first provide  the following lemma.

\begin{lemma}\label{lemma:exphnSq}
Given that $\mathbf h_n$ and $\mathbf h_m$ are independent $\forall n,m\in \mathcal{N}_1$ and $n\neq m$, the average  power of the MISO channel $\mathbf h_{n^\star}$ over the selected sub-band $n^\star$ can be expressed as
\begin{align}
R_h(N_1,E_1)\triangleq \xE\left [\left \| \mathbf h_{n^\star} \right\|^2 \right]=\frac{\beta^2 E_1 G(N_1,M)+\beta N_0 M}{\beta E_1 + N_0}, \label{eq:Ehnsq}
\end{align}
where $G(N_1,M)\geq M$ is an increasing function with respect to both $N_1$ and $M$ as defined in \eqref{eq:GN1M}. 
\end{lemma}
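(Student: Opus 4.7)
The plan is to exploit LMMSE estimation together with the across-sub-band channel independence, so that the selection event $\{n^\star=n\}$ (which is driven by the estimates) decouples from the estimation errors. Specifically, since $\mathbf{y}_n^{\text{\rom{1}}}=\sqrt{E_1}\mathbf{h}_n+\mathbf{z}_n^{\text{\rom{1}}}$ is jointly CSCG with $\mathbf{h}_n$, I would write $\mathbf{h}_n=\hat{\mathbf{h}}_n+\tilde{\mathbf{h}}_n$, where $\hat{\mathbf{h}}_n=\frac{\sqrt{E_1}\beta}{E_1\beta+N_0}\mathbf{y}_n^{\text{\rom{1}}}$ is the LMMSE (hence conditional-mean) estimate and $\tilde{\mathbf{h}}_n$ is the associated error. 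Joint Gaussianity upgrades their uncorrelatedness to full independence, and the assumed independence of $\{\mathbf{h}_n\}_{n\in\mathcal{N}_1}$ further implies that $\tilde{\mathbf{h}}_n$ is independent of the entire family $\{\mathbf{y}_m^{\text{\rom{1}}}\}_{m\in\mathcal{N}_1}$. Because $\|\hat{\mathbf{h}}_n\|^2$ equals a positive constant (the same across $n$) times $\|\mathbf{y}_n^{\text{\rom{1}}}\|^2$, the rule \eqref{eq:nstar} is equivalent to $n^\star=\arg\max_{n\in\mathcal{N}_1}\|\hat{\mathbf{h}}_n\|^2$, so $n^\star$ is measurable with respect to $\{\hat{\mathbf{h}}_n\}$ and in particular independent of $\{\tilde{\mathbf{h}}_n\}$.

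Next, I would expand $\|\mathbf{h}_{n^\star}\|^2=\|\hat{\mathbf{h}}_{n^\star}\|^2+\|\tilde{\mathbf{h}}_{n^\star}\|^2+2\,\mathrm{Re}(\hat{\mathbf{h}}_{n^\star}^H\tilde{\mathbf{h}}_{n^\star})$ and take expectations term by term. The cross term vanishes because $\tilde{\mathbf{h}}_{n^\star}$ is zero-mean and independent of $\hat{\mathbf{h}}_{n^\star}$ (the latter being a function of $\{\mathbf{y}_m^{\text{\rom{1}}}\}$). For the error-energy term, a one-line application of total expectation together with the independence noted above gives $\xE[\|\tilde{\mathbf{h}}_{n^\star}\|^2]=M\beta N_0/(E_1\beta+N_0)$, which reproduces the $N_1$-independent part of the claim. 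The main term $\xE[\|\hat{\mathbf{h}}_{n^\star}\|^2]$ is then the expected maximum of $N_1$ i.i.d. non-negative variables, each $\|\hat{\mathbf{h}}_n\|^2$ being $\frac{E_1\beta^2}{E_1\beta+N_0}$ times a Gamma$(M,1)$ random variable (since $\hat{\mathbf{h}}_n\sim\mathcal{CN}(\mathbf{0},\frac{E_1\beta^2}{E_1\beta+N_0}\mathbf{I}_M)$). Pulling the scale out front, this equals $\frac{E_1\beta^2}{E_1\beta+N_0}G(N_1,M)$, with $G(N_1,M)$ defined (consistently with \eqref{eq:GN1M}) as the expected maximum of $N_1$ i.i.d. Gamma$(M,1)$ samples.

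Combining these two contributions over the common denominator $E_1\beta+N_0$ yields the claimed $R_h(N_1,E_1)$. Monotonicity of $G(N_1,M)$ in $N_1$ is immediate, since enlarging the index set can only increase the maximum of i.i.d. non-negative samples. Monotonicity in $M$ follows by representing a Gamma$(M,1)$ variable as a sum of $M$ i.i.d.\ Exp$(1)$'s, which makes each sample, hence their maximum, stochastically increasing in $M$. The bound $G(N_1,M)\geq M$ is simply $\xE[\max_n\chi_n]\geq\xE[\chi_1]=M$. The one step I would write most carefully is the independence of $n^\star$ from $\{\tilde{\mathbf{h}}_n\}$; once that is in hand, the rest is standard bookkeeping around complex Gaussian LMMSE and Gamma order statistics.
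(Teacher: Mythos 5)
Your proposal is correct and takes essentially the same route as the paper: the paper's auxiliary Lemma~\ref{lemma:statEqv} writes $\mathbf h_n$ as a constant times the phase-I observation plus an independent Gaussian term, which is exactly your LMMSE decomposition $\mathbf h_n=\hat{\mathbf h}_n+\tilde{\mathbf h}_n$, and both arguments then reduce the dominant term to the expected maximum of $N_1$ i.i.d.\ Erlang (Gamma$(M,1)$, up to scale) variables, i.e., $G(N_1,M)$ as computed in Lemma~\ref{lemma:expV}. The only cosmetic difference is that you expand $\|\mathbf h_{n^\star}\|^2$ into three terms and kill the cross term, whereas the paper substitutes the linear representation directly; the key independence of the error from the selection rule is handled identically.
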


\begin{IEEEproof}
Please refer to Appendix~\ref{A:exphnSq}.
\end{IEEEproof}

$R_h(N_1,E_1)$ is the average power of the MISO channel when the ``best'' out of the $N_1$  independent sub-band channels  is selected. It can be easily verified that $R_h(N_1,E_1)$ increases with both $N_1$ and $E_1$, as expected.
\begin{lemma}\label{lemma:LMMSE}
The LMMSE estimator $\hat {\mathbf h}_{n^\star}$ of $\mathbf h_{n^\star}$ based on \eqref{eq:ynII} is given by
\begin{align}
\hat {\mathbf h}_{n^\star}=\frac{\sqrt{E_2}R_h(N_1,E_1)}{E_2R_h(N_1,E_1)+N_0 M}\mathbf y_{n^{\star}}^{\text{\rom{2}}}. \label{eq:hnLMMSE}
\end{align}
Define the channel estimation error as $\tilde{\mathbf h}_{n^{\star}} \triangleq \mathbf h_{n^\star}-\hat {\mathbf h}_{n^\star}$. We also have
\begin{align}
&\xE\left[\|\tilde{\mathbf h}_{n^{\star}}\|^2 \right]=\frac{N_0MR_h(N_1,E_1) }{E_2 R_h(N_1,E_1) + N_0 M}, \label{eq:mse}\\
&\xE\left[\|\hat {\mathbf h}_{n^\star}\|^2 \right]=\frac{E_2 R_h^2(N_1,E_1)}{E_2R_h(N_1,E_1)+N_0M},\label{eq:ehhat}\\
&\xE\left[ \tilde {\mathbf h}_{n^\star}^H\hat {\mathbf h}_{n^\star} \right]=0.\label{eq:ecross}
\end{align}
\end{lemma}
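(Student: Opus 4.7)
The plan is to apply the standard vector LMMSE formula to the observation in \eqref{eq:ynII}, which only requires second-order statistics and does not need $\mathbf h_{n^\star}$ to be Gaussian. The main quantity I must identify is the covariance $\mathbf C_h \triangleq \xE[\mathbf h_{n^\star}\mathbf h_{n^\star}^H]$, because once a sub-band has been picked through the max rule \eqref{eq:nstar} the per-sub-band law \eqref{eq:hn} no longer governs $\mathbf h_{n^\star}$ directly.

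First I would argue by rotational symmetry that $\mathbf C_h = (R_h(N_1,E_1)/M)\mathbf I_M$. The channels $\{\mathbf h_n\}$ are i.i.d.\ isotropic CSCG by \eqref{eq:hn}, and the phase-\rom{1} noises $\{\mathbf z_n^{\text{\rom{1}}}\}$ in \eqref{eq:ynI} are independent isotropic CSCG. Hence for every unitary $\mathbf U\in\mathbb{C}^{M\times M}$, the joint law of $\{(\mathbf h_n,\mathbf y_n^{\text{\rom{1}}})\}_{n\in\mathcal{N}_1}$ coincides with that of $\{(\mathbf U\mathbf h_n,\mathbf U\mathbf y_n^{\text{\rom{1}}})\}_{n\in\mathcal{N}_1}$. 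Since \eqref{eq:nstar} depends on the $\mathbf y_n^{\text{\rom{1}}}$'s only through the rotation-invariant norms $\|\mathbf y_n^{\text{\rom{1}}}\|^2$, this yields $\mathbf h_{n^\star}\stackrel{d}{=}\mathbf U\mathbf h_{n^\star}$ for every unitary $\mathbf U$, forcing $\mathbf C_h=\alpha\mathbf I_M$; taking the trace and invoking Lemma~\ref{lemma:exphnSq} pins down $\alpha=R_h(N_1,E_1)/M$. The same invariance (take $\mathbf U=-\mathbf I$) also gives $\xE[\mathbf h_{n^\star}]=\mathbf 0$.

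Next, because $\mathbf z_{n^\star}^{\text{\rom{2}}}$ is independent of everything in phase \rom{1}, I can assemble
\begin{align*}
\xE[\mathbf h_{n^\star}(\mathbf y_{n^\star}^{\text{\rom{2}}})^H]&=\sqrt{E_2}\,\mathbf C_h=\frac{\sqrt{E_2}\,R_h(N_1,E_1)}{M}\mathbf I_M,\\
\xE[\mathbf y_{n^\star}^{\text{\rom{2}}}(\mathbf y_{n^\star}^{\text{\rom{2}}})^H]&=E_2\mathbf C_h+N_0\mathbf I_M=\frac{E_2R_h(N_1,E_1)+N_0M}{M}\mathbf I_M,
\end{align*}
and feed these into the zero-mean LMMSE formula $\hat{\mathbf h}_{n^\star}=\xE[\mathbf h_{n^\star}(\mathbf y_{n^\star}^{\text{\rom{2}}})^H]\,\xE[\mathbf y_{n^\star}^{\text{\rom{2}}}(\mathbf y_{n^\star}^{\text{\rom{2}}})^H]^{-1}\mathbf y_{n^\star}^{\text{\rom{2}}}$ to recover \eqref{eq:hnLMMSE}. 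Identity \eqref{eq:ehhat} then follows by writing $\hat{\mathbf h}_{n^\star}=c\,\mathbf y_{n^\star}^{\text{\rom{2}}}$ with $c$ the scalar in \eqref{eq:hnLMMSE} and using $\xE[\|\mathbf y_{n^\star}^{\text{\rom{2}}}\|^2]=E_2R_h(N_1,E_1)+N_0M$. The MSE \eqref{eq:mse} I would obtain either by tracing the standard LMMSE error-covariance $\mathbf C_h-C_{\mathbf h\mathbf y}C_{\mathbf y\mathbf y}^{-1}C_{\mathbf y\mathbf h}$ or by direct expansion of $\|\tilde{\mathbf h}_{n^\star}\|^2=\|\mathbf h_{n^\star}\|^2-2\mathrm{Re}(\mathbf h_{n^\star}^H\hat{\mathbf h}_{n^\star})+\|\hat{\mathbf h}_{n^\star}\|^2$ followed by algebraic simplification. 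Finally, \eqref{eq:ecross} is just the LMMSE orthogonality principle applied componentwise: every entry of $\tilde{\mathbf h}_{n^\star}$ is uncorrelated with every entry of $\mathbf y_{n^\star}^{\text{\rom{2}}}$ and hence with the linear image $\hat{\mathbf h}_{n^\star}$; summing the $M$ diagonal entries yields $\xE[\tilde{\mathbf h}_{n^\star}^H\hat{\mathbf h}_{n^\star}]=0$.

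The only genuinely non-routine step is the isotropic-covariance reduction for $\mathbf h_{n^\star}$: one must resist treating $\mathbf h_{n^\star}$ as Gaussian (it is not, owing to the data-dependent selection) and instead verify that the selection rule is unitary-invariant so that the second-order statistics collapse to a scalar multiple of the identity. After that, the problem decouples into $M$ parallel scalar LMMSE problems with common parameters, and each line of the lemma follows by direct substitution.
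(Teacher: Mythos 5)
Your proof is correct and arrives at the same estimator, but by a somewhat different and in one respect more careful route. The paper's own proof simply \emph{postulates} that the LMMSE estimator has the scalar form $\hat{\mathbf h}_{n^\star}=b\,\mathbf y_{n^{\star}}^{\text{\rom{2}}}$ (on the stated grounds that both vectors are zero-mean with ``i.i.d.\ entries''), expands the MSE as $e=|1-b\sqrt{E_2}|^2R_h(N_1,E_1)+|b|^2N_0M$, sets $\partial e/\partial b^*=0$ to get $b$, and then verifies \eqref{eq:ehhat} and \eqref{eq:ecross} by directly computing $\xE[\mathbf h_{n^\star}^H\hat{\mathbf h}_{n^\star}]$ and subtracting. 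You instead start from the general matrix LMMSE solution $\mathbf B=C_{\mathbf h\mathbf y}C_{\mathbf y\mathbf y}^{-1}$ and show it collapses to a scalar multiple of $\mathbf I_M$ because $\xE[\mathbf h_{n^\star}\mathbf h_{n^\star}^H]=\bigl(R_h(N_1,E_1)/M\bigr)\mathbf I_M$. Your unitary-invariance argument for this isotropy is the genuinely valuable difference: after the data-dependent selection \eqref{eq:nstar} the entries of $\mathbf h_{n^\star}$ are neither Gaussian nor independent, so the paper's appeal to ``i.i.d.\ entries'' is loose, whereas your observation that the selection depends on the phase-\rom{1} observations only through the rotation-invariant norms $\|\mathbf y_n^{\text{\rom{1}}}\|^2$ rigorously forces the covariance to be a multiple of the identity and thereby \emph{justifies} the restriction to a scalar coefficient that the paper's optimization implicitly assumes. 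The remaining steps (tracing the error covariance for \eqref{eq:mse}, the orthogonality principle for \eqref{eq:ecross}) are routine and consistent with the paper; the only point worth making explicit is that $\mathbf z_{n^\star}^{\text{\rom{2}}}$ stays $\mathcal{CN}(\mathbf 0,N_0\mathbf I_M)$ and independent of $\mathbf h_{n^\star}$ despite its random index, since the phase-\rom{2} noise is independent of all phase-\rom{1} quantities determining $n^\star$ --- a fact you use when writing $\xE\bigl[\mathbf h_{n^\star}(\mathbf y_{n^{\star}}^{\text{\rom{2}}})^H\bigr]=\sqrt{E_2}\,\mathbf C_h$.
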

\begin{IEEEproof}
Please refer to Appendix~\ref{A:LMMSE}.
\end{IEEEproof}


\subsection{Net Harvested Energy Maximization}
After  the two-phase training, energy beamforming is performed by the ET over sub-band $n^{\star}$ based on the estimated channel $\hat{\mathbf h}_{n^{\star}}$ during the remaining time of $T-\tau_1-\tau_2$ seconds. According to \eqref{eq:optxn}, we set   $\mathbf s_{n^\star}=\sqrt{P_f}\hat{\mathbf h}_{n^\star}/\|\hat{\mathbf h}_{n^\star}\|$.
The resulting  energy harvested at the ER can be expressed as\footnote{We assume that $T$ is sufficiently large so that $T\gg \tau_1+\tau_2$; as a result, the time overhead for channel training is ignored (but energy cost of channel training remains).} 
\begin{align}
\hat Q &= \eta T P_f\frac{\left|\mathbf h_{n^{\star}}^H \hat{\mathbf h}_{n^{\star}}\right|^2}{\|\hat {\mathbf h}_{n^{\star}}\|^2}\\
&=\eta T P_f \Big( \|\hat{\mathbf h}_{n^{\star}} \|^2 + \frac{|\tilde{\mathbf h}_{n^{\star}}^H \hat{\mathbf h}_{n^\star}|^2}{\|\hat {\mathbf h}_{n^{\star}}\|^2}
+\tilde{\mathbf h}_{n^{\star}}^H \hat{\mathbf h}_{n^\star} + \hat{\mathbf h}_{n^\star}^H\tilde{\mathbf h}_{n^{\star}}\Big),\label{eq:hatQ2}
\end{align}
where we have used the identity $\mathbf h_{n^{\star}}=\hat{\mathbf h}_{n^{\star}}+ \tilde{\mathbf h}_{n^\star}$ in \eqref{eq:hatQ2}. The average harvested energy at the ER is then obtained  as
\begin{align}
\bar Q & (N_1,E_1,E_2) = \xE\left[ \hat Q\right] \notag \\ 
&=\eta T P_f R_h(N_1,E_1)\Big(1-\frac{(M-1)N_0}{E_2R_h(N_1,E_1)+N_0M}\Big), \label{eq:barQ}
\end{align}
where we have used the results in \eqref{eq:mse}-\eqref{eq:ecross}.

It is observed from \eqref{eq:barQ} that the average harvested energy is given by a difference of two terms. The first term, $\eta T P_f R_h$, is the average harvested energy when energy beamforming is based on the perfect knowledge of $\mathbf h_{n^\star}$, with the best  sub-band $n^\star$ determined  via phase \rom{1} training. 
 The second term can be interpreted as the loss in energy beamforming performance due to the error in the  estimated MISO channel $\hat{\mathbf h}_{n^\star}$ in phase \rom{2} training. As $E_2/N_0 \rightarrow \infty$, $\mathbf h_{n^\star}$ can be perfectly estimated and hence the second term in \eqref{eq:barQ} vanishes. 

 The \emph{net} average harvested energy at the ER, which is the average harvested energy offset by  that used for sending training signals in the two phases, is  given by 
\begin{align}\label{eq:Qnet}
 \Qnet & (N_1,E_1,E_2) =\bar Q(N_1,E_1,E_2)- E_1N_1-E_2.
\end{align}
 The problem of finding the optimal training design to maximize  $\Qnet$ can be formulated as
\begin{align}
\mathrm{(P1):} \qquad \underset{ E_1\geq 0, E_2\geq 0, N_1}{\max} \quad & \Qnet (N_1,E_1,E_2) \notag \\
\text{subject to} \quad &  N_1\in \{1,\cdots, N\}. \notag
\end{align}

\section{Optimal Training Design}\label{sec:optimalDesign}
To find the optimal solution to (P1), we first obtain the optimal training energy $E_2$ with $N_1$ and $E_1$ fixed. By discarding irrelevant terms, the resulting sub-problem can be formulated as
\begin{equation}\label{P:solveE2}
\begin{aligned}
\underset{E_2\geq 0}{\min} & \  \frac{(M-1)N_0\eta T P_f R_h(N_1,E_1)}{E_2R_h(N_1,E_1)+N_0M}+E_2,
\end{aligned}
\end{equation}
which is convex with the optimal solution given by
\begin{align}
E_2^\star(N_1,E_1)=\left[\sqrt{\eta T P_f(M-1)N_0} -\frac{N_0M}{R_h(N_1,E_1)}\right]^+,\label{eq:E2star}
\end{align}
where $[x]^+\triangleq \max\{x,0\}$.
By substituting $E_2^\star(N_1,E_1)$ into \eqref{eq:barQ}, the resulting average net  energy as a function of $N_1$ and $E_1$ can be expressed as 
\begin{equation}\label{eq:QnetN1E1}
\begin{aligned}
\Qnet(N_1, E_1)=\begin{cases}
\eta T P_f R_h(N_1,E_1) + \frac{N_0 M}{R_h(N_1,E_1)}-E_1N_1-2\sqrt{\eta TP_f(M-1)N_0},  & \text{ if } R_h(N_1,E_1) > \alpha, \\
\frac{\eta T P_f R_h(N_1,E_1)}{M}-E_1N_1,
 & \text{ otherwise},
\end{cases}
\end{aligned}
\end{equation}
where
$\alpha\triangleq \sqrt{N_0}M/\eta T P_f (M-1)$.

As a result, (P1) reduces to
\begin{equation}\label{P:E1N1}
\begin{aligned}
\underset{E_1\geq 0, N_1}{\max} \quad & \Qnet (N_1,E_1)  \\
\text{subject to} \quad &  N_1\in \{1,\cdots, N\}. 
\end{aligned}
\end{equation}

To find the optimal solution to problem \eqref{P:E1N1}, we first obtain the optimal $E_1$ with $N_1$ fixed by solving
\begin{equation}\label{P:E1}
\begin{aligned}
\underset{E_1\geq 0}{\max} \quad & \Qnet (N_1,E_1).
\end{aligned}
\end{equation}

 It can be obtained from \eqref{eq:Ehnsq} that for any fixed $N_1$, as the training energy $E_1$ varies from $0$ to $\infty$, $R_h(N_1,E_1)$ monotonically increases from $\beta M$ to $\beta G(N_1,M)$, i.e.,
\begin{align}
\beta M \leq R_h(N_1,E_1) \leq \beta G(N_1,M), \ \forall E_1\geq 0. \label{eq:Rhbound}
\end{align}
 As a result, problem \eqref{P:E1} can be solved by separately considering the following three cases:

{\it Case 1: $\alpha \geq \beta G (N_1,M)$:}  In this case, we have $R_h(N_1,E_1)\leq \alpha$ and hence $\Qnet(N_1,E_1)=\eta T P_f R_h(N_1,E_1)/M-E_1N_1$, $\forall E_1\geq 0$. 
By substituting $R_h(N_1,E_1)$ with  \eqref{eq:Ehnsq}, problem \eqref{P:E1} reduces to
\begin{equation}\label{P:E1case1}
\begin{aligned}
\underset{E_1\geq 0}{\max}\  \eta T P_f \beta  \frac{\beta E_1 G(N_1,M)/M+ N_0}{\beta E_1 + N_0}-E_1N_1,
\end{aligned}
\end{equation}
which is convex with the optimal solution given by
\begin{align}
E_1^{\star}(N_1)=\left[\sqrt{\frac{\eta T P_f N_0\left(G(N_1,M)/M-1 \right)}{N_1}} -\frac{N_0}{\beta}\right]^+.\notag
\end{align}

{\it Case 2: $\alpha \leq \beta M$:} In this case, $R_h(N_1,E_1)  > \alpha$, $\forall E_1\geq 0$. Therefore, $\Qnet(N_1,E_1)$ is given by the first expression of \eqref{eq:QnetN1E1}. After discarding irrelevant terms, problem \eqref{P:E1} can be explicitly written as 
\begin{equation}\label{P:E1case2}
\begin{aligned}
\underset{E_1\geq 0}{\max}\  \eta T P_f  \frac{\beta^2 E_1 G(N_1,M)+\beta N_0 M}{\beta E_1 + N_0}+ \frac{N_0 M (\beta E_1 + N_0)}{\beta^2 E_1 G(N_1,M)+\beta N_0 M}-E_1N_1.
\end{aligned}
\end{equation}
Problem \eqref{P:E1case2} is non-convex in general. However, as the objective function is continuously differentiable, the optimal solution is  given either by $E_1=0$, or by one of the positive stationary points satisfying $\frac{\partial \Qnet(N_1,E_1)}{\partial E_1}=0$, which can be easily determined by solving a quartic equation.

{\it Case 3: $\beta M <  \alpha < \beta G (N_1,M)$:} In this case, it can be obtained that $\Qnet(N_1,E_1)$ in \eqref{eq:QnetN1E1} can be explicitly expressed as \eqref{eq:N1E1Case3} shown at the top of the next page,
\begin{align}\label{eq:N1E1Case3}
\Qnet(N_1,E_1)=
\begin{cases}
\eta T P_f  \frac{\beta^2 E_1 G(N_1,M)/M+\beta N_0}{\beta E_1 + N_0}-E_1N_1, & \text{ if } E_1\leq E_0,\\
\eta T P_f  \frac{\beta^2 E_1 G(N_1,M)+\beta N_0 M}{\beta E_1 + N_0}+ \frac{N_0 M (\beta E_1 + N_0)}{\beta^2 E_1 G(N_1,M)+\beta N_0 M}-E_1N_1 -2\sqrt{\eta TP_f(M-1)N_0},& \text{ otherwise},
\end{cases}
\end{align}
where
$E_0\triangleq \frac{N_0(\alpha-\beta M)}{\beta(\beta G-\alpha)}$.
Similar to that in Case 2, the optimal solution to  problem \eqref{P:E1} with $ \Qnet (N_1,E_1)$ given in \eqref{eq:N1E1Case3} is given either by the boundary point $E_1=0$  or one of  the stationary points, which can be readily determined by solving a quartic equation.

With problem \eqref{P:E1} solved for all three cases as discussed above, the corresponding optimal value $\Qnet^{\star}(N_1)$ as a function of $N_1$ can be readily determined. Therefore, finding the optimal solution to problem \eqref{P:E1N1} and that to the original problem (P1) reduces to determining the optimal number of sub-bands to be trained, i.e., $N_1^{\star}=\arg\underset{1\leq N_1 \leq N}{\max} \Qnet^{\star}(N_1)$, which can be easily found by exhaustive search.

\begin{figure}
\centering
\includegraphics[scale=0.3]{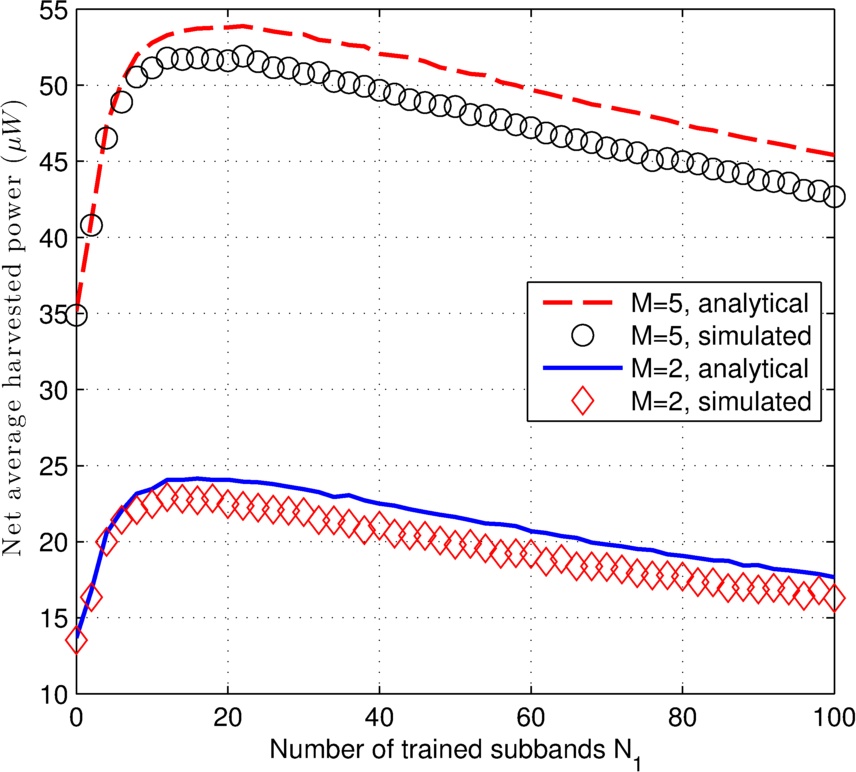}
\caption{Net average harvested power versus the number of trained sub-bands $N_1$. 
}\label{F:QnetVsN1}
\end{figure}

\section{Numerical Results}
In this section, numerical examples are provided to corroborate our study.  To model the frequency-selective channel,
 we assume a multi-path power delay profile with the exponential distribution  $A(\tau)=\frac{1}{\sigma_{\text{rms}}} e^{-\tau/\sigma_{\text{rms}}}$, $\tau\geq 0$, where $\sigma_{\text{rms}}$  denotes the root-mean-square (rms) delay spread. We set $\sigma_{\text{rms}}=1\mu$s so that the $50\%$ channel coherence bandwidth, i.e., the frequency separation for which the amplitude correlation is $0.5$,  is $B_c=\frac{1}{2\pi \sigma_{\text{rms}}}\approx 160$ kHz. The total available spectrum for energy transmission is $B=10$MHz, which is divided into $N=100$ sub-bands each with bandwidth $B_s=100$kHz.  The average power attenuation between the ET and the ER is assumed to be $50$ dB, i.e., $\beta=10^{-5}$, and the transmission power at the ET is set as $P_f=1$watt or $30$dBm. The power spectrum density of the training noise received at the ET is $N_0=-120$dBm/Hz.  The energy harvesting efficiency at the ER is set as $\eta=0.8$.

In Fig.~\ref{F:QnetVsN1}, by varying the number of sub-bands $N_1$ that are trained in phase \rom{1}, the net average harvested power achieved by the proposed two-phase training scheme is plotted for $M=5$ and $M=2$, where the average is taken over $10000$ random channel realizations. The channel block length is set as $T=0.5$ms. 
 The analytical result obtained in Section~\ref{sec:optimalDesign}, i.e.,  $\Qnet^{\star}(N_1)/T$ with $\Qnet^{\star}(N_1)$ denoting the optimal value of problem \eqref{P:E1},  is also shown in Fig.~\ref{F:QnetVsN1}. It is observed that the simulation and analytical results match well for small and moderate $N_1$ values, for which the assumption of independent channels between any two  sub-bands as in Lemma~\ref{lemma:exphnSq} is more valid. 
 Furthermore, Fig.~\ref{F:QnetVsN1} shows that there is an optimal number of sub-bands trained to maximize the net harvested energy, as a result of the trade-off between achieving more frequency-diversity gain (with lager $N_1$) and reducing the training energy ($E_1N_1$ in phase \rom{1}).



 \begin{figure}
\centering
\includegraphics[scale=0.3]{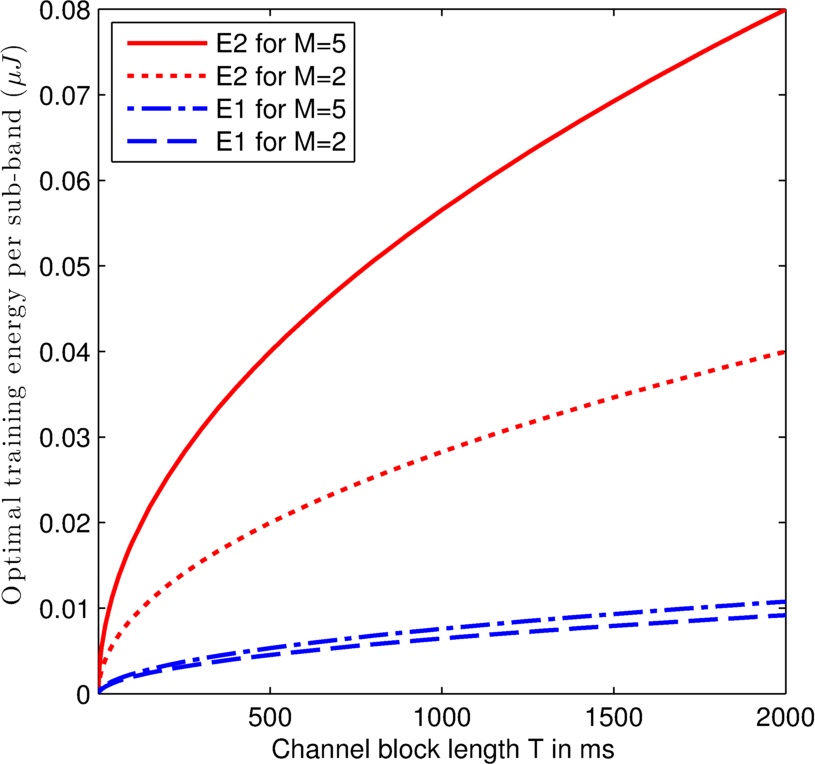}
\caption{Optimal training energy $E_1$ and $E_2$ versus block length $T$ for $M=2$ and $M=5$.}\label{F:P1P2VsT}
\end{figure}

 In Fig.~\ref{F:P1P2VsT}, the optimal training energy per sub-band $E_1$ and $E_2$ in phases \rom{1} and \rom{2}, respectively, are plotted against the channel block length $T$, with $T$ ranging from $0.1$ms to $2$ seconds. It is observed that $E_1$ and $E_2$ both increase with $T$, as expected. 
  Furthermore, for both setups, $E_2$ is significantly larger than $E_1$, since in phase \rom{2}, only the selected sub-band needs to be further trained, whereas the training energy in phase \rom{1} needs to be distributed over  $N_1$ sub-bands to exploit the frequency-diversity.


In Fig.~\ref{F:PnetVsTN100M5}, the net average harvested power based on the proposed two-phase training scheme is plotted against block length $T$ with $M=5$. The following four benchmark schemes are also included for comparison: i) perfect CSIT, whose average harvested energy can be obtained as $\bar{Q}_{\max}=\eta T  P_f \beta G(N,M)$; ii) no CSIT, with $\bar{Q}_{\text{noCSIT}}=\eta T P_f \beta$; iii) phase \rom{1} training only, which corresponds to the special case of the two-phase training scheme with $E_2=0$; iv) phase \rom{2} training only, which corresponds to the two-phase scheme with $E_1=0$.  It is observed from Fig.~\ref{F:PnetVsTN100M5} that the proposed two-phase training  scheme  approaches to the performance upper bound with perfect CSIT as $T$ increases, and significantly outperforms the other  three benchmark schemes. It is also worth noting that for multi-antenna frequency-selective WET systems, exploiting either frequency-diversity gain or beamforming gain alone is far from optimal; instead, a good balance between these  two gains as achieved in the proposed two-phase training optimization  is needed.
 \begin{figure}
\centering
\includegraphics[scale=0.3]{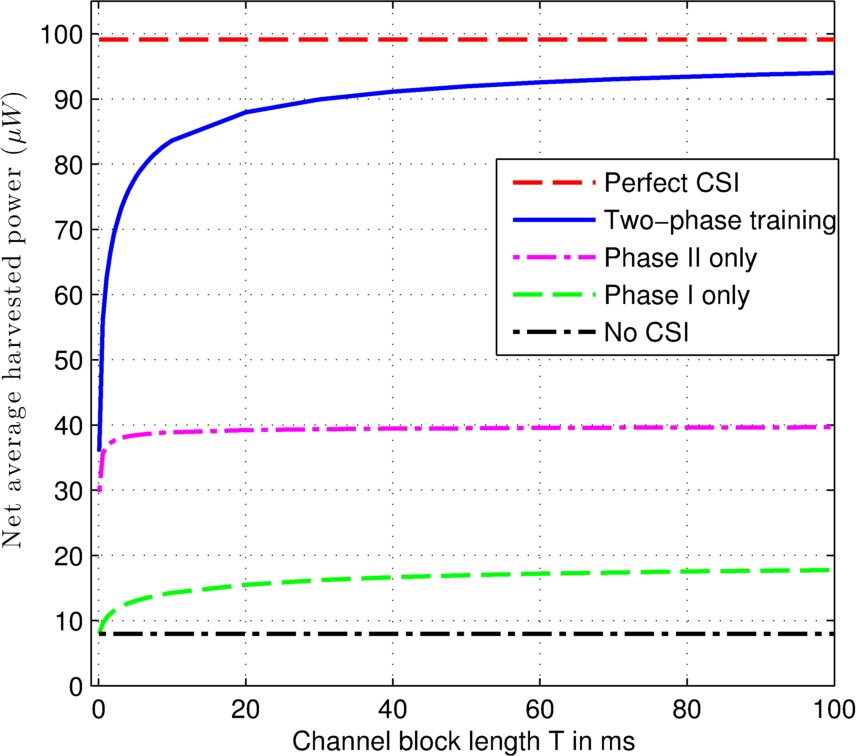}
\caption{Net average harvested power with $M=5$.}\label{F:PnetVsTN100M5}
\end{figure}

 \section{Conclusion}
 This paper studies the optimal  training design for a MISO  WET system in frequency-selective channels. By exploiting channel reciprocity, a two-phase training scheme is proposed to exploit the frequency-diversity and energy-beamforming gains, respectively. A closed-form expression has been derived for the average harvested energy. The optimal training scheme, including the number of independent sub-bands trained and the energy allocated for each of the two training  phases, is derived. Numerical results are provided to validate our analysis and  show the effectiveness of the proposed scheme by optimally balancing the achievable diversity and beamforming gains with limited training energy.

%



\appendices
\section{A Useful Lemma}
\begin{lemma}\label{lemma:expV}
Let $\mathbf v_1,\cdots, \mathbf v_{N_1}\in \mathbb{C}^{M\times 1}$ be $N_1$ i.i.d. zero-mean CSCG random vectors distributed as $\mathbf v_n\sim \mathcal{CN}(\mathbf 0, \sigma^2_v \mathbf I_M)$, $\forall n$. Then we have
\begin{align}
\xE\left[\underset{n=1,\cdots, N_1}{\max}\left\| \mathbf v_n \right\|^2 \right]=\sigma^2_v G(N_1,M), \label{eq:expMax}
\end{align}
where $G(N_1,M)$ is a function of $N_1$ and $M$ given by
\begin{align}
G(N_1,M)=\sum_{n=1}^{N_1} \binom{N_1}{n}(-1)^{n+1} c_n, \label{eq:GN1M}
\end{align}
with
\begin{align}
c_n=\sum_{k_0+\cdots k_{M-1}=n}\binom{n}{k_0,\cdots,k_{M-1}} \left( \prod_{m=0}^{M-1} \frac{1}{\left(m!\right)^{k_m}}\right)\left(\sum_{m=0}^{M-1}mk_m \right)! \frac{1}{n^{1+\sum_{m=0}^{M-1}mk_m}}.\label{eq:an}
\end{align}
Note that in \eqref{eq:an}, the summation is taken over all sequences of non-negative integer indices $k_0$ to $k_{M-1}$ with the sum equal to $n$, and the coefficients $\dbinom{n}{k_0,\cdots,k_{M-1}}$ are known as multinomial coefficients, which can be computed as
\begin{align}
\binom{n}{k_0,\cdots,k_{M-1}}=\frac{n!}{k_0!\cdots k_{M-1}!}.
\end{align}
\end{lemma}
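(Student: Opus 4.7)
The plan is to compute the expected maximum of the squared norms directly from its CDF. The key observation is that since $\mathbf v_n\sim \mathcal{CN}(\mathbf 0,\sigma_v^2 \mathbf I_M)$, the normalized squared norm $X_n \triangleq \|\mathbf v_n\|^2/\sigma_v^2$ is the sum of $M$ i.i.d.\ unit-mean exponential random variables, hence follows a Gamma$(M,1)$ distribution with CDF
$$F(x)=1-e^{-x}\sum_{m=0}^{M-1}\frac{x^m}{m!},\qquad x\geq 0.$$
Let $Y=\max_{n=1,\dots,N_1} X_n$. By independence, $\Pr(Y\leq y)=F(y)^{N_1}$, and the identity $\xE[Y]=\int_0^\infty (1-F(y)^{N_1})\,dy$ will be my starting point.

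Next, I would expand $1-F(y)^{N_1}$ via the binomial theorem. Writing $g(y)\triangleq e^{-y}\sum_{m=0}^{M-1} y^m/m!$ so that $F(y)=1-g(y)$, we get
$$1-F(y)^{N_1}=\sum_{n=1}^{N_1}\binom{N_1}{n}(-1)^{n+1}\bigl[g(y)\bigr]^n.$$
Thus $\xE[Y]=\sum_{n=1}^{N_1}\binom{N_1}{n}(-1)^{n+1}\int_0^\infty [g(y)]^n\, dy$, and identifying the remaining integral with $c_n$ will yield exactly the formula \eqref{eq:GN1M}.

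To show $\int_0^\infty [g(y)]^n dy=c_n$, I would apply the multinomial theorem to $\left(\sum_{m=0}^{M-1} y^m/m!\right)^n$, obtaining
$$[g(y)]^n=e^{-ny}\!\!\sum_{k_0+\cdots+k_{M-1}=n}\!\!\binom{n}{k_0,\dots,k_{M-1}}\left(\prod_{m=0}^{M-1}\frac{1}{(m!)^{k_m}}\right)y^{\sum_{m=0}^{M-1} m k_m}.$$
Integrating term by term and using the standard gamma integral $\int_0^\infty y^s e^{-ny}dy=s!/n^{s+1}$ with $s=\sum_m m k_m$ yields precisely the expression for $c_n$ in \eqref{eq:an}. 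Finally, rescaling by $\sigma_v^2$ (since $\|\mathbf v_n\|^2=\sigma_v^2 X_n$) gives \eqref{eq:expMax}.

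The main obstacle is bookkeeping rather than analysis: the three nested summations (binomial over $n$, multinomial over $(k_0,\dots,k_{M-1})$, and the exponential/polynomial structure of $g(y)$) must be tracked carefully so that the exponent of $y$, the factorials $(m!)^{k_m}$, the multinomial coefficient, and the power of $n$ in the denominator all appear in the exact positions required by \eqref{eq:an}. Once this is handled, the derivation is essentially mechanical, and the monotonicity properties of $G(N_1,M)$ claimed implicitly in Lemma~\ref{lemma:exphnSq} follow from the fact that $\xE[\max_n \|\mathbf v_n\|^2]$ is non-decreasing in both $N_1$ (adding more vectors can only increase the maximum) and $M$ (concatenating more independent coordinates strictly increases the norm in distribution).
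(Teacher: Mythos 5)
Your proposal is correct and follows essentially the same route as the paper's own proof: identify $\|\mathbf v_n\|^2$ as Erlang/Gamma with shape $M$, write $\xE[\max_n V_n]=\int_0^\infty(1-F_V(v))\,dv$, apply the binomial and multinomial expansions, and finish with the integral identity $\int_0^\infty y^s e^{-ny}\,dy=s!/n^{s+1}$. The only cosmetic difference is that you normalize to a unit-rate Gamma first and rescale by $\sigma_v^2$ at the end, whereas the paper carries the rate $\lambda=1/\sigma_v^2$ through the computation.
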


\begin{proof}
Define the random variables $V_n\triangleq\|\mathbf v_n\|^2$, $n=1,\cdots, N_1$. It then follows that $V_1,\cdots, V_{N_1}$ are i.i.d. Erlang distributed with shape parameter $M$ and rate $\lambda=1/\sigma^2_v$, whose cumulative distribution function (CDF) is given by
\begin{align}
F_{V_n}(v)=\pr (V_n\leq v)=1-\sum_{m=0}^{M-1} \frac{1}{m!} e^{-\lambda v}(\lambda v)^m, \ \forall n.
\end{align}
Let $V\triangleq \underset{n=1,\cdots, N_1}{\max} V_n$. Then the CDF of $V$ can be obtained as
\begin{align}
F_V(v)=\pr \left(v_1\leq v,\cdots, v_{N_1} \leq v \right)=\prod_{n=1}^{N_1}F_{V_n}(v)=\left(1-\sum_{m=0}^{M-1} \frac{1}{m!} e^{-\lambda v}(\lambda v)^m \right)^{N_1}.
\end{align}
With binomial expansion, the expectation of $V$ can be  expressed as
\begin{align}
\xE\left[ V\right]&=\int_0^\infty \left(1-F_V(v)\right)dv\\
&=\sum_{n=1}^{N_1} \binom{N_1}{n} (-1)^{n+1} a_n,\label{eq:Ev}
\end{align}
where
\begin{align}
a_n & =  \int_0^{\infty} e^{-\lambda nv}\left(\sum_{m=0}^{M-1} \frac{1}{m!}(\lambda v)^m  \right)^n dv\\
&=\sum_{k_0+\cdots k_{M-1}=n}\binom{n}{k_0,\cdots, k_{M-1}}
\left( \prod_{m=0}^{M-1} \frac{1}{\left(m!\right)^{k_m}}\right)\lambda^{\sum_{m=0}^{M-1}mk_m } \int_0^\infty e^{-\lambda nv} v^{\sum_{m=0}^{M-1} mk_m}dv \label{eq:multinom}\\
&=\frac{1}{\lambda} c_n=\sigma_v^2 c_n, \label{eq:intIdentity}
\end{align}
where \eqref{eq:multinom} follows from the multinomial  expansion theorem, and  \eqref{eq:intIdentity} follows from the integral identity $\int_0^\infty x^n e^{-\mu x}dx=n! \mu^{-n-1}$(\cite{524}3.351).
The result in \eqref{eq:expMax} can then be obtained by substituting \eqref{eq:intIdentity} into \eqref{eq:Ev}.

Furthermore, it can be directly obtained from \eqref{eq:expMax} that $G(N_1,M)$ is an increasing function with respect to both $N_1$ and $M$, with $G(1,M)=M$, $\forall M$.

This completes the proof of Lemma~\ref{lemma:expV}.
\end{proof}

\section{Proof of Lemma~\ref{lemma:exphnSq}}\label{A:exphnSq}
Note that in the absence of training phase \rom{1} or only one sub-band is trained ($N_1=1$), the distribution of $\mathbf h_{n^\star}$ is simply given by \eqref{eq:hn}. In this case, $\xE\left [\left \| \mathbf h_{n^\star} \right\|^2 \right]=\beta M$, which is equal to that obtained by evaluating \eqref{eq:Ehnsq} with $E_1=0$ or $N_1=1$. For the general scenario with $N_1\geq 2$, $n^{\star}$ is determined by the sub-band with the maximum total received energy as in \eqref{eq:nstar}. As a consequence, the corresponding channel vector $\mathbf h_{n^\star}$ statistically depends on all the $N_1$ channels $\mathbf h_1,\cdots, \mathbf h_{N_1}$ via \eqref{eq:ynI} and \eqref{eq:nstar}. To exploit such a relationship, we first show the following result:
\begin{lemma}\label{lemma:statEqv}
The input-output relationship in \eqref{eq:ynI} is statistically equivalent to
\begin{align}\label{eq:hnEqv}
\mathbf h_n = \frac{\beta \sqrt{E_1}}{\beta E_1 + N_0} \mathbf y_n^{\text{\rom{1}}}+ \sqrt{\frac{\beta N_0}{\beta E_1 + N_0}}\mathbf t_n, \ n=1,\cdots, N_1,
\end{align}
where $\mathbf t_n\sim \mathcal{CN}(\mathbf 0, \mathbf I_M)$ is a CSCG random vector independent of $\mathbf y_n^{\text{\rom{1}}}$, i.e.,
\begin{align}
\xE\left [\mathbf  y_n^{\text{\rom{1}}} \mathbf t_n^H\right]=\mathbf 0, \ n=1,\cdots, N_1.
\end{align}
\end{lemma}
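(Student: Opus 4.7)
The plan is to recognize equation \eqref{eq:hnEqv} as nothing more than the LMMSE (equivalently, the Bayesian MMSE) decomposition of $\mathbf h_n$ given the observation $\mathbf y_n^{\text{\rom{1}}}=\sqrt{E_1}\mathbf h_n+\mathbf z_n^{\text{\rom{1}}}$, re-arranged so that the prior $\mathbf h_n$ appears on the left. Because $\mathbf h_n\sim\mathcal{CN}(\mathbf 0,\beta\mathbf I_M)$ is independent of $\mathbf z_n^{\text{\rom{1}}}\sim\mathcal{CN}(\mathbf 0,N_0\mathbf I_M)$, the joint distribution of $(\mathbf h_n,\mathbf y_n^{\text{\rom{1}}})$ is zero-mean jointly CSCG, so the LMMSE estimator coincides with the conditional mean and the residual is \emph{independent} of the observation, not merely uncorrelated with it. This independence is the whole point of reparametrizing: later on, when the selection event $\{n^\star=n\}$ (which depends on $\mathbf y_n^{\text{\rom{1}}}$ only) is conditioned upon, $\mathbf t_n$ will remain $\mathcal{CN}(\mathbf 0,\mathbf I_M)$, making $\xE[\|\mathbf h_{n^\star}\|^2]$ tractable.

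Concretely, I would first record the covariances $\xE[\mathbf y_n^{\text{\rom{1}}}(\mathbf y_n^{\text{\rom{1}}})^H]=(\beta E_1+N_0)\mathbf I_M$ and $\xE[\mathbf h_n(\mathbf y_n^{\text{\rom{1}}})^H]=\beta\sqrt{E_1}\mathbf I_M$, both of which are immediate from \eqref{eq:ynI} and the independence of $\mathbf h_n$ and $\mathbf z_n^{\text{\rom{1}}}$. Then the LMMSE estimator of $\mathbf h_n$ from $\mathbf y_n^{\text{\rom{1}}}$ is
\begin{align}
\check{\mathbf h}_n=\xE[\mathbf h_n(\mathbf y_n^{\text{\rom{1}}})^H]\bigl(\xE[\mathbf y_n^{\text{\rom{1}}}(\mathbf y_n^{\text{\rom{1}}})^H]\bigr)^{-1}\mathbf y_n^{\text{\rom{1}}}=\frac{\beta\sqrt{E_1}}{\beta E_1+N_0}\mathbf y_n^{\text{\rom{1}}},
\end{align}
which reproduces the first term of \eqref{eq:hnEqv}. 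Define the residual $\mathbf r_n\triangleq\mathbf h_n-\check{\mathbf h}_n$; by the orthogonality principle, $\xE[\mathbf y_n^{\text{\rom{1}}}\mathbf r_n^H]=\mathbf 0$, and a short computation gives $\xE[\mathbf r_n\mathbf r_n^H]=\beta\mathbf I_M-\tfrac{\beta^2 E_1}{\beta E_1+N_0}\mathbf I_M=\tfrac{\beta N_0}{\beta E_1+N_0}\mathbf I_M$.

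To finish, I would invoke the fact that any linear combination of the jointly CSCG pair $(\mathbf h_n,\mathbf y_n^{\text{\rom{1}}})$ is itself CSCG, so $(\mathbf y_n^{\text{\rom{1}}},\mathbf r_n)$ is jointly CSCG with zero cross-covariance, whence $\mathbf r_n$ is statistically independent of $\mathbf y_n^{\text{\rom{1}}}$. Writing $\mathbf r_n=\sqrt{\beta N_0/(\beta E_1+N_0)}\,\mathbf t_n$ with $\mathbf t_n\sim\mathcal{CN}(\mathbf 0,\mathbf I_M)$ and adding $\check{\mathbf h}_n$ to both sides reproduces \eqref{eq:hnEqv} exactly, and the independence claim $\xE[\mathbf y_n^{\text{\rom{1}}}\mathbf t_n^H]=\mathbf 0$ is inherited from the orthogonality step. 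Finally, since the decomposition is performed sub-band by sub-band and the pairs $(\mathbf h_n,\mathbf z_n^{\text{\rom{1}}})$ are independent across $n\in\mathcal{N}_1$, the same holds for $(\mathbf y_n^{\text{\rom{1}}},\mathbf t_n)$ across $n$.

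There is essentially no hard step; the only point that deserves care is emphasizing that the equivalence in \eqref{eq:hnEqv} is a statement about \emph{joint distributions}, not an almost-sure identity recovering $\mathbf z_n^{\text{\rom{1}}}$ from $\mathbf h_n$ and $\mathbf y_n^{\text{\rom{1}}}$. In particular, the $\mathbf t_n$ appearing here is the Bayesian posterior innovation and is independent of $\mathbf y_n^{\text{\rom{1}}}$, whereas the original noise $\mathbf z_n^{\text{\rom{1}}}$ is independent of $\mathbf h_n$ but \emph{not} of $\mathbf y_n^{\text{\rom{1}}}$. This distinction is what makes the rewritten form usable for averaging under the data-dependent selection rule \eqref{eq:nstar}, which I expect is exactly how the lemma feeds into the proof of Lemma~\ref{lemma:exphnSq} in the following appendix.
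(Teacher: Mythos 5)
Your proof is correct and rests on exactly the same ingredients as the paper's: the two covariances $\xE[\mathbf y_n^{\text{\rom{1}}}(\mathbf y_n^{\text{\rom{1}}})^H]=(\beta E_1+N_0)\mathbf I_M$ and $\xE[\mathbf h_n(\mathbf y_n^{\text{\rom{1}}})^H]=\beta\sqrt{E_1}\mathbf I_M$, together with joint Gaussianity to upgrade zero cross-correlation to independence. The only difference is direction of presentation — the paper verifies that the right-hand side of \eqref{eq:hnEqv} reproduces the correct marginal and cross-correlation, whereas you construct the decomposition from the LMMSE orthogonality principle — which is an equivalent argument.
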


\begin{IEEEproof}
It follows from \eqref{eq:hn} and \eqref{eq:ynI} that $\mathbf y_n^{\text{\rom{1}}}$ is a CSCG random vector distributed as
\begin{align}\label{eq:ynIStat}
\mathbf y_n^{\text{\rom{1}}} \sim \mathcal{CN}\left(\mathbf 0, (\beta E_1+N_0) \mathbf I_M\right), \ \forall n.
\end{align}
Furthermore, the cross-correlation between $\mathbf h_n$ and $\mathbf y_n^{\text{\rom{1}}}$ is
\begin{align}\label{eq:crossCorr}
\xE\left[\mathbf y_n^{\text{\rom{1}}} \mathbf h_n^H \right]=\beta \sqrt{E_1} \mathbf I_M.
\end{align}

To prove Lemma~\ref{lemma:statEqv}, it is sufficient to show that the random vector $\mathbf h_n$ obtained by \eqref{eq:hnEqv} has the same distribution as \eqref{eq:hn}, and also has the same cross-correlation with $\mathbf y_n^{\text{\rom{1}}}$ as \eqref{eq:crossCorr}. The desired results can be easily verified based on \eqref{eq:hnEqv} and \eqref{eq:ynIStat}.
\end{IEEEproof}

By applying Lemma~\ref{lemma:statEqv}, we can obtain the following result
\begin{align}
\xE\left[ \left\| \mathbf h_{n^\star} \right\|^2 \right]&=\frac{\beta^2 E_1}{(\beta E_1 + N_0)^2}\xE\left[ \left \| \mathbf y_{n^\star}^{\text{\rom{1}}} \right\|^2\right]+\frac{\beta N_0 M}{\beta E_1 + N_0}\\
&=\frac{\beta^2 E_1}{(\beta E_1 + N_0)^2}\xE\left[ \underset{n=1,\cdots, N_1}{\max}\left \| \mathbf y_{n}^{\text{\rom{1}}} \right\|^2\right]+\frac{\beta N_0 M}{\beta E_1 + N_0} \label{eq:usenstar}\\
&=\frac{\beta^2 E_1 G(N_1,M)+\beta N_0 M}{\beta E_1 +N_0},\label{eq:Ehnsq2}
\end{align}
where \eqref{eq:usenstar} follows from \eqref{eq:nstar}, and \eqref{eq:Ehnsq2} is true due to Lemma~\ref{lemma:expV} and \eqref{eq:ynIStat}.

This completes the proof of Lemma~\ref{lemma:exphnSq}.

\section{Proof of Lemma~\ref{lemma:LMMSE}}\label{A:LMMSE}
Since both $\mathbf h_{n^\star}$ and $y_{n^{\star}}^{\text{\rom{2}}}$ are zero-mean random vectors with i.i.d entries, the LMMSE estimator can be expressed as $\hat {\mathbf h}_{n^\star}=b\mathbf y_{n^{\star}}^{\text{\rom{2}}}$, with $b$ a complex-valued parameter to be determined.  
The corresponding MSE can be expressed as
\begin{align}
e&=\xE\left[ \left\| \tilde{\mathbf h}_{n^{\star}}\right\|^2\right] =\xE\left[ \left\| \left(1-b\sqrt{E_2} \right) \mathbf h_{n^\star} -b \mathbf z_{n^{\star}}^{\text{\rom{2}}} \right\|^2\right] \\
&=\left|1-b\sqrt{E_2} \right|^2 R_h(N_1,E_1)  + |b|^2 N_0 M\\
&=|b|^2 \left( E_2 R_h(N_1,E_1) + N_0 M\right)-(b+b^*)\sqrt{E_2} R_h(N_1,E_1)+R_h(N_1,E_1).
\end{align}
By setting the derivative of $e$ with respect to $b^*$ equals to zero, the optimal coefficient $b$  can be obtained as
\begin{align}
b=\frac{\sqrt{E_2}R_h(N_1,E_1)}{E_2R_h(N_1,E_1)+N_0M}.
\end{align}
The resulting MMSE can be obtained accordingly.

Furthermore, the following result can be obtained
\begin{align}
\xE\left[\|\hat {\mathbf h}_{n^\star}\|^2 \right]=|b|^2 \xE\left[ \|\mathbf y_{n^{\star}}^{\text{\rom{2}}}\|^2\right]=\frac{E_2 R_h^2(N_1,E_1)}{E_2R_h(N_1,E_1)+N_0M}.
\end{align}

To show that $\xE\left[ \tilde {\mathbf h}_{n^\star}^H\hat {\mathbf h}_{n^\star} \right]=0$, we will use the following result
\begin{align}
\xE\left[ {\mathbf h}_{n^\star}^H \hat {\mathbf h}_{n^\star}\right]=b \xE\left[ {\mathbf h}_{n^\star}^H \mathbf y_{n^{\star}}^{\text{\rom{2}}}\right]=\frac{E_2 R_h^2(N_1,E_1)}{E_2R_h(N_1,E_1)+N_0M}=\xE\left[\|\hat {\mathbf h}_{n^\star}\|^2 \right].
\end{align}
Therefore, we have
\begin{align}
\xE\left[ \tilde {\mathbf h}_{n^\star}^H\hat {\mathbf h}_{n^\star} \right]=
\xE\left[ {\mathbf h}_{n^\star}^H \hat {\mathbf h}_{n^\star}\right]-\xE\left[\|\hat {\mathbf h}_{n^\star}\|^2 \right]=0,
\end{align}
where we have used the identity $\tilde {\mathbf h}_{n^\star}={\mathbf h}_{n^\star}-\hat {\mathbf h}_{n^\star}$.

This completes the proof of Lemma~\ref{lemma:LMMSE}.

\bibliographystyle{IEEEtran}
\bibliography{IEEEabrv,IEEEfull}

\begin{thebibliography}{10}
\providecommand{\url}[1]{#1}
\csname url@samestyle\endcsname
\providecommand{\newblock}{\relax}
\providecommand{\bibinfo}[2]{#2}
\providecommand{\BIBentrySTDinterwordspacing}{\spaceskip=0pt\relax}
\providecommand{\BIBentryALTinterwordstretchfactor}{4}
\providecommand{\BIBentryALTinterwordspacing}{\spaceskip=\fontdimen2\font plus
\BIBentryALTinterwordstretchfactor\fontdimen3\font minus
  \fontdimen4\font\relax}
\providecommand{\BIBforeignlanguage}[2]{{%
\expandafter\ifx\csname l@#1\endcsname\relax
\typeout{** WARNING: IEEEtran.bst: No hyphenation pattern has been}%
\typeout{** loaded for the language `#1'. Using the pattern for}%
\typeout{** the default language instead.}%
\else
\language=\csname l@#1\endcsname
\fi
#2}}
\providecommand{\BIBdecl}{\relax}
\BIBdecl

\bibitem{502}
H.~J. Visser and R.~J.~M. Vullers, ``{RF} energy harvesting and transport for
  wireless sensor network applications: Principles and requirements,''
  \emph{Proceedings of the IEEE}, vol. 101, no.~6, pp. 1410--1423, Jun. 2013.

\bibitem{525}
S.~Bi, C.~K. Ho, and R.~Zhang, ``Wireless powered communication: opportunities
  and challenges,'' submitted to {\it IEEE Commun. Mag.}, available online at
  http://arxiv.org/abs/1408.2335.

\bibitem{478}
R.~Zhang and C.-K. Ho, ``{MIMO} broadcasting for simultaneous wireless
  information and power transfer,'' \emph{{IEEE} Trans. Wireless Commun.},
  vol.~12, no.~5, pp. 1989--2001, May 2013.

\bibitem{497}
L.~Lu, G.~Y. Li, A.~L. Swindlehurst, A.~Ashikhin, and R.~Zhang, ``An overview
  of massive {MIMO}: benefits and challenges,'' to appear in \emph{{IEEE} J.
  Sel. Topics Signal Process.}

\bibitem{504}
P.~Grover and A.~Sahai, ``Shannon meets tesla: Wireless information and power
  transfer,'' in \emph{Int. Symp. on Inf. Theory}, Jun. 2010, pp. 2363--2367.

\bibitem{522}
D.~W.~K. Ng, E.~S. Lo, and R.~Schober, ``Wireless information and power
  transfer: Energy efficiency optimization in {OFDMA} systems,'' \emph{{IEEE}
  Trans. Wireless Commun.}, vol.~12, no.~12, pp. 6352--6370, Dec. 2013.

\bibitem{526}
X.~Zhou, R.~Zhang, and C.~K. Ho, ``Wireless information and power transfer in
  multiuser {OFDM} systems,'' \emph{{IEEE} Trans. Wireless Commun.}, vol.~13,
  no.~4, pp. 2282--2294, Apr. 2014.

\bibitem{484}
D.~J. Love, R.~W. Heath~Jr., V.~K.~N. Lau, D.~Gesbert, B.~D. Rao, and
  M.~Andrews, ``An overview of limited feedback in wireless communication
  systems,'' \emph{{IEEE} J. Sel. Areas Commun.}, vol.~26, no.~8, pp.
  1341--1365, Oct. 2008.

\bibitem{495}
G.~Yang, C.~K. Ho, and Y.-L. Guan, ``Dynamic resource allocation for
  multiple-antenna wireless power transfer,'' \emph{{IEEE} Trans. Signal
  Process.}, vol.~62, no.~14, pp. 3565 -- 3577, Jun. 2014.

\bibitem{491}
J.~Xu and R.~Zhang, ``Energy beamforming with one-bit feedback,'' to appear in
  \emph{IEEE Trans. Signal Process.}, available online at
  http://arxiv.org/abs/1312.1444.

\bibitem{528}
Y.~Zeng and R.~Zhang, ``Optimal training for wireless energy transfer,''
  submitted to \emph{IEEE Trans. Commun.}, available online at
  http://arxiv.org/abs/1403.7870.

\bibitem{514}
X.~Zhou, R.~Zhang, and C.~K. Ho, ``Wireless information and power transfer:
  architecture design and rate-energy tradeoff,'' \emph{{IEEE} Trans. Commun.},
  vol.~61, no.~11, pp. 4757--4767, Nov. 2013.

\bibitem{524}
I.~Gradshteyn and I.~M. Ryzhik, \emph{Table of integrals, series and products},
  7th~ed.\hskip 1em plus 0.5em minus 0.4em\relax Elsevier Academic Pres, 2007.

\end{thebibliography}

\end{document}